\newtheorem{theorem}{Theorem}
\newtheorem{lemma}{Lemma}
\newtheorem{proposition}{Proposition}
\newcommand{\im}{\operatorname{im}}
\title{A note on the ultrahyperbolic wave equation in $3+3$ dimensions}
\author{Jonathan Holland}
\begin{document}
\maketitle

This note concerns the following two operators on $S^3\times S^3$ recently studied in connection with twistor theory in \cite{Sparling1}, \cite{Sparling2}.  For $x,y\in S^3$ and $f\in C^\infty(S^3\times S^3)$, define
\begin{align}
\label{T}Tf(x,y) &= \int_{S^3} f(xg,gy)\,dg\\
\label{Box}\Box f &= \Delta_x f - \Delta_y f
\end{align}
where $dg$ is the invariant probability measure on the Lie group $S^3\cong SU(2)$ and $\Delta$ is the Laplace-Beltrami operator on $S^3$.  Our main theorem is the following

\begin{theorem}
Acting on smooth functions, $\ker T = \im \Box$ and $\ker\Box = \im T$;  that is, $T, \Box$ form an exact couple.
\end{theorem}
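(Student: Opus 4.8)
The plan is to diagonalize both operators simultaneously by harmonic analysis on $SU(2)$. Writing $\pi_k$ for the irreducible representation of dimension $k+1$ and $V_k$ for its space, the Peter--Weyl theorem gives an orthogonal Hilbert-space decomposition $L^2(S^3\times S^3)=\bigoplus_{m,n\ge 0}\mathcal H_{m,n}$ (completed direct sum), where $\mathcal H_{m,n}$ is spanned by products $\phi(x)\psi(y)$ with $\phi$ a matrix coefficient of $\pi_m$ and $\psi$ a matrix coefficient of $\pi_n$; equivalently $\mathcal H_{m,n}\cong (V_m\otimes V_m^{*})\otimes(V_n\otimes V_n^{*})$. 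The first thing I would verify is that both $T$ and $\Box$ preserve each block $\mathcal H_{m,n}$: for $\Box$ this is immediate, since $\Delta_x$ and $\Delta_y$ act on the two tensor factors separately; for $T$ it holds because the substitution $x\mapsto xg,\ y\mapsto gy$ leaves unchanged the representations $\pi_m,\pi_n$ in which the two factors transform. The problem then decouples into a finite-dimensional linear-algebra problem on each block, and I would reduce the theorem to computing $\ker$ and $\im$ of $T$ and $\Box$ on $\mathcal H_{m,n}$.

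On $\mathcal H_{m,n}$ the operator $\Box=\Delta_x-\Delta_y$ acts as the scalar $\lambda_m-\lambda_n$, where $\lambda_k$ is the Casimir eigenvalue of $\Delta$ on matrix coefficients of $\pi_k$, a fixed multiple of $k(k+2)$; thus $\Box$ is proportional to $(n-m)(n+m+2)$, which vanishes exactly when $m=n$ and is otherwise bounded away from $0$ by a fixed positive constant. This spectral gap is the only quantitative fact I will need about $\Box$. For $T$ I would insert the matrix-coefficient form of $f$ and perform the $g$-integration by Schur orthogonality: the average $\int_{SU(2)}\pi_m(g)\otimes\pi_n(g)\,dg$ is the projection onto the $SU(2)$-invariants of $V_m\otimes V_n$, and these vanish unless $m=n$, so $T\equiv 0$ on $\mathcal H_{m,n}$ for $m\neq n$. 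On the diagonal block a short computation should show that $T$ acts as the identity on two of the four tensor factors and as a ``twisted swap'' $S_n$ on the remaining $V_n\otimes V_n^{*}$, where $S_n$ is built from the invariant vector $\Omega\in V_n\otimes V_n$. Because $\Omega$ corresponds to a $G$-equivariant isomorphism $V_n\to V_n^{*}$, hence (by Schur) to a scalar multiple of a unitary, one finds $S_n=\tfrac{1}{n+1}\,U$ with $U$ unitary; consequently $T|_{\mathcal H_{n,n}}$ is invertible with all singular values equal to $\tfrac1{n+1}$.

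With the block computations in hand the four containments become formal bookkeeping. The operator identities $T\Box=0$ and $\Box T=0$ hold block by block (on each $\mathcal H_{m,n}$ at least one factor vanishes), and since $T$ and $\Box$ map $C^\infty$ into $C^\infty$, this yields $\im\Box\subseteq\ker T$ and $\im T\subseteq\ker\Box$ at once. For the reverse inclusions I would identify $\ker T=\bigoplus_{m\neq n}\mathcal H_{m,n}$ and $\ker\Box=\bigoplus_{n}\mathcal H_{n,n}$ and then invert blockwise: if $Tf=0$ then $f$ has no diagonal components and $u=\Box^{-1}f$ (dividing each off-diagonal component by $\lambda_m-\lambda_n$) solves $\Box u=f$; if $\Box f=0$ then $f$ is supported on the diagonal blocks and $u=\sum_n (T|_{\mathcal H_{n,n}})^{-1}f_{n,n}$ solves $Tu=f$.

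The genuine content beyond this formal argument, and the step I expect to be the main obstacle, is that the theorem is asserted for $C^\infty$ functions rather than merely in $L^2$, so I must check that the blockwise inverses produce smooth functions. Smoothness on the compact manifold $S^3\times S^3$ is equivalent to rapid decay of the Peter--Weyl coefficients, so what is required are uniform-in-degree bounds. The off-diagonal inversion is harmless because $|\lambda_m-\lambda_n|$ is bounded below, so dividing cannot destroy rapid decay; the diagonal inversion is controlled because $\|(T|_{\mathcal H_{n,n}})^{-1}\|=n+1$ grows only polynomially, so multiplying rapidly decaying coefficients by this factor again leaves them rapidly decaying. Establishing these two estimates---above all the clean evaluation of the singular values of $T$ on the diagonal blocks, which is precisely where the exactness originates---is the crux; everything else is the orthogonal decomposition and the elementary representation theory of $SU(2)$.
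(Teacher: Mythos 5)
Your proposal is correct and arrives at exactly the paper's multiplier description of $T$ --- kernel $\widehat{\bigoplus}_{m\not=n}\mathscr{H}_m\otimes\mathscr{H}_n$, image $\widehat{\bigoplus}_n\mathscr{H}_n\otimes\mathscr{H}_n$, all singular values $(n+1)^{-1}$ on the diagonal blocks --- but by a genuinely different route at the crux. Your Peter--Weyl blocks $\mathcal{H}_{m,n}$ are the same as the paper's $\mathscr{H}_m\otimes\mathscr{H}_n$ (matrix coefficients of $\pi_k$ versus harmonic polynomials on $\mathbb{R}^4$), and your smoothness step --- rapid decay of coefficients, the spectral gap $|(n-m)(n+m+2)|\ge 3$ off the diagonal, polynomial growth $n+1$ of the blockwise inverse of $T$ --- is the coefficient-space form of the paper's argument via $H^s$, the smoothing bound $T\in\mathscr{B}(H^s,H^{s+1})$, and $C^\infty=\bigcap_s H^s$. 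The real divergence is the evaluation of $T$ on the diagonal blocks: the paper never computes $T$ itself there, but instead shows $T^2=\lambda_k^2\operatorname{Id}$ by applying the invariance characterization of zonal harmonics (Lemma~\ref{Zonals}(c)) to the iterated kernel $K$, extracts $\lambda_k=(k+1)^{-1}$ from the trace identities of Lemma~\ref{Zonals}(d), and only then infers the involution $R_k$ abstractly from self-adjointness; you compute $T$ directly via Schur orthogonality, obtaining $\frac{1}{n+1}$ times a twisted swap built from the invariant $\Omega\in V_n\otimes V_n$. Your route buys more: it identifies the paper's otherwise unexplained involution $R_k$ concretely (the $\Omega$-twisted swap of the two inner tensor factors) and yields $T$ rather than only $T^2$. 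Two points you should make explicit when writing it up. First, your $m=n$ vanishing criterion uses the self-duality $V_n\cong V_n^*$ of $SU(2)$ irreducibles; for a general compact group the invariants of $V_m\otimes V_n$ are governed by $V_n\cong V_m^*$, so the diagonal structure here is special to $SU(2)$ (the paper's quaternionic-conjugation manipulations encode the same fact). Second, in $\int \pi_m(g)\otimes\pi_n(g)\,dg$ the second factor enters through a \emph{right} translation $B\mapsto B\pi_n(g)$, not a left one, so it must be transposed via $\Omega$ (the quaternionic structure $J$) before Schur orthogonality applies as a projection onto invariants; this re-wiring is precisely what turns the rank-one projection $\frac{1}{n+1}\Omega\Omega^*$ into an invertible unitary twisted swap rather than a rank-one operator on the block --- for $SU(2)$ the identity $\int \pi(g)_{ij}\pi(g)_{kl}\,dg = \frac{1}{n+1}\overline{J_{ki}}\,\overline{(J^{-1})_{jl}}$ with $J$ unitary delivers both the unitarity of $U$ and the constant $\frac{1}{n+1}$, so the step you flag as the main obstacle does go through. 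A final minor bookkeeping difference: where you invert $T$ blockwise to pin down $\ker T$, the paper argues once with self-adjointness ($Tx=0$ with $x=a\oplus Tc$ forces $T^2c=0$, hence $Tc=0$); both are sound.
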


The strategy is to first prove that the theorem holds for $T$ and $\Box$ restricted to the subspace $\mathscr{H}$ consisting of finite linear combinations of spherical harmonics, and then to extend this first, by a density argument, to the Sobolev spaces $H^s = H^s(S^3\times S^3)$.  The result then follows from the Sobolev lemma $C^\infty=\bigcap_s H^s$.  Both operators are understood as acting on $H^s$ in a weak sense.

We first review some of the theory of spherical harmonics in order to fix notation.  Let $\mathscr{P}_k$ be the space of all polynomials $f$ on $\mathbb{R}^4$ homogeneous of degree $k$: $f(tx) = t^k f(x)$ for all $t\in \mathbb{R}$ and $x\in \mathbb{R}^4$.  Let $\mathscr{H}_k\subset \mathscr{P}_k$ be the subset consisting of all harmonic polynomials: $\Delta_{\mathbb{R}^4} f = 0$, where $\Delta_{\mathbb{R}^4}$ is the ordinary Euclidean Laplacian.  The following theorem summarizes the basic properties of spherical harmonics:
\begin{proposition}\mbox{}
\begin{enumerate}[(a)]
\item Every polynomial $p\in \mathscr{P}_k$ can be written uniquely as
$$p(x) = p_0(x) + |x|^2 p_1(x) + \cdots + |x|^{\lfloor k/2\rfloor} p_{\lfloor k/2\rfloor}$$
where $p_j\in \mathscr{H}_{k-2j}$.  
\item If $p\in\mathscr{H}_k$, then $\Delta_{S^3} p|_{S^3} = -k(k+2) p|_{S^3}$, where $\Delta_{S^3}$ is the Laplace-Beltrami operator on the unit sphere in $\mathbb{R}^4$.
\end{enumerate}
\end{proposition}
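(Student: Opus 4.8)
The plan is to prove the two parts by essentially independent classical arguments: part (a) by one orthogonality computation on the finite-dimensional spaces $\mathscr{P}_k$, iterated, and part (b) by separating variables for $\Delta_{\mathbb{R}^4}$ in polar coordinates.

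For (a) I would equip each $\mathscr{P}_k$ with the Fischer (apolar) inner product, characterised on monomials by $\langle x^\alpha,x^\beta\rangle=\alpha!\,\delta_{\alpha\beta}$, under which multiplication by a coordinate $x_i$ is the adjoint of $\partial_{x_i}$. Summing over $i$, multiplication by $|x|^2=\sum_i x_i^2$ is then the adjoint of $\Delta_{\mathbb{R}^4}=\sum_i\partial_{x_i}^2$, so that $\langle |x|^2 p,q\rangle=\langle p,\Delta_{\mathbb{R}^4}q\rangle$ for $p\in\mathscr{P}_{k-2}$ and $q\in\mathscr{P}_k$. The multiplication map $L\colon\mathscr{P}_{k-2}\to\mathscr{P}_k$, $q\mapsto|x|^2q$, is injective (the polynomial ring is an integral domain) and has adjoint $L^{\ast}=\Delta_{\mathbb{R}^4}\colon\mathscr{P}_k\to\mathscr{P}_{k-2}$. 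The finite-dimensional identity $(\im L)^{\perp}=\ker L^{\ast}$ then gives the orthogonal splitting
\[
\mathscr{P}_k=\im L\oplus\ker L^{\ast}=|x|^2\mathscr{P}_{k-2}\oplus\mathscr{H}_k,
\]
since $\ker L^{\ast}=\{q\in\mathscr{P}_k:\Delta_{\mathbb{R}^4}q=0\}=\mathscr{H}_k$ by definition. Applying this one-step splitting recursively to the factor $\mathscr{P}_{k-2}$ and multiplying through by $|x|^2$ yields
\[
\mathscr{P}_k=\bigoplus_{j=0}^{\lfloor k/2\rfloor}|x|^{2j}\,\mathscr{H}_{k-2j},
\]
which is the decomposition asserted in (a); the uniqueness of the components $p_j\in\mathscr{H}_{k-2j}$ is precisely the directness of this sum.

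For (b) I would write $x=r\theta$ with $r=|x|$ and $\theta\in S^3$ and use the polar form of the Euclidean Laplacian in four dimensions,
\[
\Delta_{\mathbb{R}^4}=\partial_r^2+\frac{3}{r}\,\partial_r+\frac{1}{r^2}\,\Delta_{S^3}.
\]
For $p\in\mathscr{H}_k$, homogeneity gives $p(r\theta)=r^k Y(\theta)$ with $Y=p|_{S^3}$; substituting into $\Delta_{\mathbb{R}^4}p=0$ and cancelling $r^{k-2}$ leaves $\big(k(k-1)+3k\big)Y+\Delta_{S^3}Y=0$, and since $k(k-1)+3k=k(k+2)$ this is exactly $\Delta_{S^3}Y=-k(k+2)Y$.

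The only real content lies in (a), and specifically in the adjoint relation $\langle |x|^2 p,q\rangle=\langle p,\Delta_{\mathbb{R}^4}q\rangle$: once multiplication by $x_i$ and $\partial_{x_i}$ are known to be Fischer-adjoint, the decomposition and its uniqueness are forced by finite-dimensional linear algebra. I therefore expect verifying that adjointness to be the only (minor) obstacle; part (b) is then routine bookkeeping in the separated coordinates.
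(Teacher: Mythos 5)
Your proof is correct, but there is nothing in the paper to compare it against: the proposition is stated as a summary of classical background on spherical harmonics and the paper supplies no proof of it (the standard reference here is Stein--Weiss, which the paper cites later for the zonal harmonics). Your argument for (a) is in fact the classical one from that source: the Fischer pairing $\langle p,q\rangle = p(\partial)q\,|_{x=0}$ makes multiplication by $x_i$ adjoint to $\partial_{x_i}$, hence $L\colon q\mapsto |x|^2 q$ adjoint to $\Delta_{\mathbb{R}^4}$, and the finite-dimensional identity $(\im L)^\perp=\ker L^{\ast}$ yields $\mathscr{P}_k=|x|^2\mathscr{P}_{k-2}\oplus\mathscr{H}_k$; iterating, with injectivity of $L$ guaranteeing that directness survives the recursion, gives existence and uniqueness simultaneously. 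The polar-coordinate computation in (b) is likewise the standard one, and your arithmetic checks out: $k(k-1)+3k=k(k+2)$, and the degenerate cases $k=0,1$ cause no trouble since the coefficient vanishes appropriately. Two minor remarks. First, your version of the decomposition, with factors $|x|^{2j}$, is the correct statement; the exponents in the paper's display ($|x|^2 p_1$ but then $|x|^{\lfloor k/2\rfloor}p_{\lfloor k/2\rfloor}$) contain a typo, and you have proved the intended result. Second, the one step you flagged as the ``only real obstacle''---the adjointness $\langle |x|^2p,q\rangle=\langle p,\Delta_{\mathbb{R}^4}q\rangle$---is indeed immediate from the monomial computation $\langle x_ip,q\rangle=\langle p,\partial_{x_i}q\rangle$ and summation over $i$, so the proof is complete as written.
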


By the Stone-Weierstrass theorem, spherical harmonics are dense in $C(S^3)$, and hence also in $L^2(S^3)=H^0(S^3)$.  The different eigenspaces of the Laplace-Beltrami operator are orthonormal in $L^2(S^3)$, and so the spaces $\mathscr{H}_k$ give an orthogonal decomposition
$$L^2(S^3) = \widehat{\bigoplus_{k=0}^\infty} \mathscr{H}_k$$
in terms of which it is possible to decompose uniquely any $f\in L^2(S^3)$ into the Fourier series
\begin{equation}\label{Fourier}
f = \sum_{k=0}^\infty a_k Y_k
\end{equation}
where $Y_k\in \mathscr{H}_k$ are normalized to have unit norm.  In particular, the spaces $\mathscr{H}_k$ exhaust the eigenspaces of $\Delta_{S^3}$. The Sobolev spaces admit the following characterization in terms of the Fourier series:

\begin{lemma}
Let $s\ge 0$ and $f\in H^0(S^3)$ have Fourier decomposition \eqref{Fourier}.  Then $f\in H^s(S^3)$ if and only if $\sum_{k=0}^\infty |a_k|^2 k^{2s} < \infty$.
\end{lemma}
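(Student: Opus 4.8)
The plan is to realize the Sobolev norm spectrally and then reduce the stated condition to an elementary comparison of the eigenvalue weights. I would begin by fixing the definition of $H^s(S^3)$ for general $s\ge 0$ to be the intrinsic one furnished by the Laplace--Beltrami operator, namely the completion of $\mathscr{H}$ (equivalently of $C^\infty(S^3)$) under the norm $\|f\|_{H^s}^2 = \|(1-\Delta_{S^3})^{s/2} f\|_{L^2}^2$, with $(1-\Delta_{S^3})^{s/2}$ defined by the spectral calculus. For integer $s$ this agrees with the usual definition via covariant derivatives, and for general $s$ with the chart-based definition; I would either cite this standard equivalence or simply adopt the spectral norm as the working definition, since the entire argument of the note is carried out in the $\Delta$-eigenbasis.

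The key input is Proposition~(b): $\Delta_{S^3}$ acts on $\mathscr{H}_k$ by the scalar $-k(k+2)$, so $1-\Delta_{S^3}$ acts by $1+k(k+2) = (k+1)^2$, and hence $(1-\Delta_{S^3})^{s/2}$ acts on $\mathscr{H}_k$ by the scalar $(k+1)^s$. Writing the Fourier decomposition \eqref{Fourier} with $a_k Y_k$ the $L^2$-orthogonal projection of $f$ onto $\mathscr{H}_k$ (so $\|Y_k\|_{L^2}=1$), orthogonality of the distinct eigenspaces together with Parseval's identity yields
$$\|f\|_{H^s}^2 = \sum_{k=0}^\infty |a_k|^2\,(k+1)^{2s}.$$
Thus membership in $H^s$ is equivalent to the finiteness of this weighted sum.

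It then remains to compare the weight $(k+1)^{2s}$ with $k^{2s}$. For $k\ge 1$ one has $k\le k+1\le 2k$, whence $k^{2s}\le (k+1)^{2s}\le 2^{2s}\,k^{2s}$; the two weighted sums are therefore comparable up to the constant $2^{2s}$, the sole discrepancy being the $k=0$ term, which contributes $|a_0|^2$ to the first sum and $0$ to the second and so cannot affect convergence. Consequently $\sum_k |a_k|^2 (k+1)^{2s}<\infty$ if and only if $\sum_k |a_k|^2 k^{2s}<\infty$, which is precisely the assertion.

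I expect no serious obstacle once the definition of $H^s$ is pinned down; the one genuine subtlety is exactly that choice, since for non-integer $s$ the equivalence between the spectral norm and the classical localized Sobolev norm is a nontrivial, though standard, fact about elliptic operators on compact manifolds. Everything after that is the spectral computation above combined with the trivial comparison of sequences.
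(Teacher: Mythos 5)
Your proof is correct. The paper states this lemma without proof, treating it as standard background; your argument---reading off from Proposition (b) that $(1-\Delta_{S^3})^{s/2}$ acts on $\mathscr{H}_k$ by $(k+1)^s$, applying Parseval, and comparing the weights $(k+1)^{2s}$ and $k^{2s}$ (with the $k=0$ term harmless since $f\in H^0$)---is exactly the standard spectral argument implicit in the paper's setup, and you rightly flag the equivalence of the spectral and chart-based $H^s$ norms as the one point requiring a citation.
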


Now we turn attention to $S^3\times S^3$.  The eigenvalues of $\Delta_{S^3\times S^3}=\Delta_x + \Delta_y$ are precisely the tensor products $\mathscr{H}_k\otimes\mathscr{H}_\ell$.  Thus to each $f\in H^s(S^3\times S^3)$, there is a double series expansion in spherical harmonics
$$f = \sum_{k,\ell = 0}^\infty a_{k\ell} Y_k\otimes Y_\ell$$
where the $Y_k$ are normalized elements of $\mathscr{H}_k$.  As a result, there is an isomorphism of Hilbert spaces
$$H^s(S^3\times S^3) \cong H^s(S^3)\,\widehat{\otimes}\, H^s(S^3),$$
where the tensor product appearing on the right is the completion of the usual tensor product under the Hilbert inner product defined on simple tensors by $\langle a\otimes b, c\otimes d\rangle_s = \langle a, c\rangle_s \langle b, d\rangle_s$.

\begin{lemma}\label{kernel} \mbox{}
\begin{enumerate}[(a)]
\item The operator $T:H^s(S^3\times S^3)\to H^s(S^3\times S^3)$ is a bounded self-adjoint operator.  
\item The kernel of $T$ contains the subspace $\widehat{\bigoplus}_{k\not=\ell} \mathscr{H}_k\otimes\mathscr{H}_\ell$, where the closure is taken in the $H^s$ norm.
\end{enumerate}
\end{lemma}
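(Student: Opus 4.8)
The plan is to exploit the fact that $T$ intertwines the left and right translation actions on $S^3\cong SU(2)$ and therefore respects the spherical-harmonic decomposition of $H^s(S^3\times S^3)$. Once $T$ is seen to be block diagonal with respect to the orthogonal decomposition $H^s=\widehat{\bigoplus}_{k,\ell}\mathscr{H}_k\otimes\mathscr{H}_\ell$, both (a) and (b) reduce to elementary computations on single blocks. The structural observation is this: because the bi-invariant metric makes each left translation $L_x\colon g\mapsto xg$ and right translation $R_y\colon g\mapsto gy$ an isometry, and $\Delta_{S^3}$ commutes with isometries, the function $g\mapsto Y_k(xg)$ lies in $\mathscr{H}_k$ for each fixed $x$ and $g\mapsto Y_\ell(gy)$ lies in $\mathscr{H}_\ell$ for each fixed $y$. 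Consequently $T(Y_k\otimes Y_\ell)$ is a spherical harmonic of degree $k$ in $x$ and of degree $\ell$ in $y$, so it lies in $\mathscr{H}_k\otimes\mathscr{H}_\ell$; thus $T$ maps each eigenblock into itself.

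For part (b), I would fix a simple tensor $Y_k\otimes Y_\ell$ with $k\ne\ell$ and write $T(Y_k\otimes Y_\ell)(x,y)=\int_{S^3}\phi(g)\psi(g)\,dg$, where $\phi(g)=Y_k(xg)\in\mathscr{H}_k$ and $\psi(g)=Y_\ell(gy)\in\mathscr{H}_\ell$ by the observation above. Since the eigenspace $\mathscr{H}_\ell$ is invariant under complex conjugation (the Laplacian being real), this integral is the $L^2$ pairing of $\phi$ against $\overline{\psi}\in\mathscr{H}_\ell$, which vanishes because spherical harmonics of different degrees are $L^2$-orthogonal. Hence every such simple tensor is annihilated, and once boundedness is established $T$ kills the $H^s$-closure $\widehat{\bigoplus}_{k\ne\ell}\mathscr{H}_k\otimes\mathscr{H}_\ell$.

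For part (a), I would first prove boundedness in $L^2=H^0$: since $dg$ is a probability measure, the Cauchy--Schwarz inequality gives $|Tf(x,y)|^2\le\int_{S^3}|f(xg,gy)|^2\,dg$, and integrating in $(x,y)$ while using that $(x,y)\mapsto(xg,gy)$ preserves the product measure yields $\|Tf\|_{L^2}\le\|f\|_{L^2}$. To pass to $H^s$ I would invoke the block structure: on each finite-dimensional block $\mathscr{H}_k\otimes\mathscr{H}_\ell$ the $H^s$ inner product is a fixed positive scalar multiple of the $L^2$ inner product, so the norm of the restriction of $T$ to that block is the same whether computed in $L^2$ or in $H^s$, namely at most $1$; summing over blocks gives $\|T\|_{H^s\to H^s}\le 1$. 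Self-adjointness I would establish in $L^2$ by the substitution $(x,y)\mapsto(xg,gy)$ followed by $g\mapsto g^{-1}$, legitimate because Haar measure on the compact group $SU(2)$ is inversion-invariant; this yields $\langle Tf,h\rangle_{L^2}=\langle f,Th\rangle_{L^2}$. The same proportionality of inner products transfers self-adjointness of each restricted block to the $H^s$ inner product, and block-diagonality of $T$ then gives self-adjointness on all of $H^s$.

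The main obstacle is conceptual rather than computational: the crux is recognizing the block-diagonal structure forced by translation invariance, after which the analytic estimates are routine applications of Cauchy--Schwarz and measure-preservation. The one point demanding care is the transfer of boundedness and self-adjointness from $H^0$ to general $H^s$, and for this the observation that the $L^2$ and $H^s$ inner products are proportional on each eigenblock is exactly what is needed.
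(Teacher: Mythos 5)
Your proposal is correct and takes essentially the same route as the paper: part (b) is the paper's argument verbatim (for fixed $(x,y)$ the integrand pairs a degree-$k$ harmonic against a degree-$\ell$ harmonic in the variable $g$, which vanishes by orthogonality of the $\mathscr{H}_k$, with boundedness handling the closure), and your $L^2$ contraction via Cauchy--Schwarz against the probability measure $dg$ plus measure invariance of $(x,y)\mapsto(xg,gy)$ is the paper's integral Minkowski estimate in equivalent form, while your explicit inversion-invariance substitution is what the paper compresses into ``an application of Fubini's theorem.'' Your transfer from $H^0$ to $H^s$ via block-diagonality and the proportionality of the $L^2$ and $H^s$ inner products on each eigenblock is an unpacked version of the paper's one-line conjugation by $(1-\Delta_x-\Delta_y)^{s/2}$ --- both rest on $T$ commuting with the Laplacian, which your observation that $T$ preserves each $\mathscr{H}_k\otimes\mathscr{H}_\ell$ makes explicit.
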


\begin{proof}
\begin{enumerate}[(a)]
\item It is sufficient to prove boundedness and self-adjointness in $L^2$, since the proof for $H^s$ follows by applying the $L^2$ result to $(1-\Delta_x-\Delta_y)^{s/2}f$.  By the integral Minkowski inequality,
\begin{align*}
\|Tf\|_2 &= \left\{\int_{S^3\times S^3}\left|\int_{S^3} f(xg,gy)\,dg\right|^2\,dxdy\right\}^{1/2} \\
&\le \int_{S^3}\left\{\int_{S^3\times S^3} |f(xg,gy)|^2\,dxdy\right\}^{1/2}\,dg = \|f\|_2.
\end{align*}
This proves boundedness.  Self-adjointness follows from an application of Fubini's theorem.
\item Suppose first that $f(x,y) = u(x)v(y)$ where $u\in\mathscr{H}_k$, $v\in\mathscr{H}_\ell$, and $k\not=\ell$.  Since $\Delta_{S^3}$ is invariant under both left and right translation in $S^3$, for fixed $x,y\in S^3$ we have $R_xu\in \mathscr{H}_k$ and $L_yv\in\mathscr{H}_\ell$ so that by orthogonality of the $\mathscr{H}_k$,
$$Tf(x,y) = \int_{S^3} R_xu L_yv = 0.$$
The result now follows from the boundedness of $T$.
\end{enumerate}
\end{proof}

A sharper result, completely classifying the range of $T$ as well, requires knowing that the range is closed.  Let $E_{k,\ell}:H^s\to \mathscr{H}_k\otimes\mathscr{H}_{\ell}$ be the orthogonal projection.  The following gives the operator $T$ as a Fourier multiplier:

\begin{theorem}
For any $s$, $T\in \mathscr{B}(H^s,H^{s+1})$.  The kernel of $T$ is $\widehat{\bigoplus}_{k\not=\ell}\mathscr{H}_k\otimes\mathscr{H}_\ell$ and the image is $\widehat{\bigoplus}_k \mathscr{H}_k\otimes \mathscr{H}_k$.  In fact,
$$T = \sum_{k=0}^\infty \frac{1}{k+1}R_kE_{k,k}$$
where each $R_k$ is a reflection operator (self-adjoint involution) in the space $\mathscr{H}_k\otimes\mathscr{H}_k$.
\end{theorem}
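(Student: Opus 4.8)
The plan is to diagonalize $T$ against the decomposition $H^s\cong\widehat{\bigoplus}_{k,\ell}\mathscr{H}_k\otimes\mathscr{H}_\ell$ and to reduce the whole statement to a single scalar identity: that $T^2$ acts as multiplication by $(k+1)^{-2}$ on each diagonal block $\mathscr{H}_k\otimes\mathscr{H}_k$. First I would note that $T$ commutes with both $\Delta_x$ and $\Delta_y$. Since the Laplace--Beltrami operator on $SU(2)$ is bi-invariant, differentiating under the integral sign in \eqref{T} and moving $\Delta$ past the translations by $g$ gives $\Delta_x Tf=T\Delta_x f$ and likewise in $y$. Hence $T$ preserves every joint eigenspace $\mathscr{H}_k\otimes\mathscr{H}_\ell$, so $T=\bigoplus_{k,\ell}T_{k\ell}$ is block diagonal; by Lemma~\ref{kernel}(b) the off-diagonal blocks vanish, leaving $T=\bigoplus_k T_{k,k}$ with $T_{k,k}$ acting on $\mathscr{H}_k\otimes\mathscr{H}_k$.

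The heart of the argument is the computation of $T_{k,k}^2$. By the Peter--Weyl theorem $\mathscr{H}_k$ is exactly the span of the matrix coefficients $x\mapsto\operatorname{tr}(A\,\pi_k(x))$ of the irreducible representation $\pi_k$ of dimension $n=k+1$, so it suffices to evaluate $T^2$ on a product $u\otimes v$ of such coefficients. Writing $T^2(u\otimes v)(x,y)=\iint u(xhg)\,v(ghy)\,dg\,dh$ and substituting $g\mapsto h^{-1}g$ decouples the variables into $\iint u(xg)\,v(h^{-1}gh\,y)\,dg\,dh$. The inner $h$-integration is handled by the Schur averaging identity $\int\pi_k(h)^{-1}X\pi_k(h)\,dh=\frac{1}{n}\operatorname{tr}(X)\,I$, which collapses it to $\frac{1}{n}\chi_k(g)\,v(y)$, where $\chi_k$ is the character; the remaining $g$-integration is a convolution against $\chi_k$, which by the Peter--Weyl projection formula returns $\frac{1}{n}u(x)$. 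Combining, $T^2(u\otimes v)=\frac{1}{(k+1)^2}\,u\otimes v$, i.e. $T_{k,k}^2=(k+1)^{-2}\operatorname{Id}$. Getting the two substitutions and the two applications of Schur orthogonality to line up with the correct normalizations is the main obstacle; the rest is formal.

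With this identity in hand the structure follows immediately. Set $R_k=(k+1)T_{k,k}$; then $R_k^2=\operatorname{Id}$, and $R_k$ is self-adjoint because $T$ is self-adjoint by Lemma~\ref{kernel}(a) and respects the orthogonal block decomposition. Thus each $R_k$ is a reflection and $T=\sum_k\frac{1}{k+1}R_k\,E_{k,k}$. Since $\frac{1}{k+1}R_k$ is invertible on each diagonal block, $T$ is injective on $\widehat{\bigoplus}_k\mathscr{H}_k\otimes\mathscr{H}_k$ and annihilates its orthogonal complement, so $\ker T=\widehat{\bigoplus}_{k\neq\ell}\mathscr{H}_k\otimes\mathscr{H}_\ell$ and the image lies in, and block-by-block exhausts, $\widehat{\bigoplus}_k\mathscr{H}_k\otimes\mathscr{H}_k$.

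Finally I would read off the mapping property and the closedness of the range directly from the multiplier form. On $\mathscr{H}_k\otimes\mathscr{H}_k$ the operator $\Delta_x+\Delta_y$ has eigenvalue $-2k(k+2)$, and $R_k$, being a self-adjoint involution, is an isometry in every $H^s$ norm on that block. The single factor $\frac{1}{k+1}$ in $T_{k,k}$ therefore trades one power of the eigenvalue against the Sobolev order: because the ratio $(1+2k(k+2))/(k+1)^2$ is bounded above and below away from $0$, the operator $T$ maps $H^s$ boundedly into $H^{s+1}$, and its restriction to the diagonal subspace is a Banach-space isomorphism onto $\widehat{\bigoplus}_k\mathscr{H}_k\otimes\mathscr{H}_k\subset H^{s+1}$. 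This yields $T\in\mathscr{B}(H^s,H^{s+1})$ together with the closed range and the exact description of the image, the kernel having already been determined above.
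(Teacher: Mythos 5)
Your proof is correct, but it reaches the key identity $T^2|_{\mathscr{H}_k\otimes\mathscr{H}_k}=(k+1)^{-2}\operatorname{Id}$ by a genuinely different route than the paper. The paper never invokes Peter--Weyl: it represents $T$ on the diagonal block as integration against a kernel built from zonal spherical harmonics, $Z^{(k)}_{\bar{x}'xy'}(y)$, shows that the iterated kernel $K$ of $T^2$ is invariant under left and right translations in each factor, appeals to the characterization of zonal harmonics (Lemma~\ref{Zonals}(c)) to conclude that $K$ is a scalar multiple $\lambda_k^2$ of a product of zonals, and only then extracts the constant by a separate trace computation via the integral identities of Lemma~\ref{Zonals}(d). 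You instead identify $\mathscr{H}_k$ with the span of matrix coefficients of the irreducible representation $\pi_k$ and compute $T^2(u\otimes v)$ directly: the substitution $g\mapsto h^{-1}g$ decouples the integral, and Schur's averaging identity plus the character projection formula deliver the scalar \emph{and} its value $(k+1)^{-2}$ in one stroke, where the paper needs two arguments. (The two computations are secretly the same, since on a group the zonal harmonic is essentially the character: $Z^{(k)}_x(y)=(k+1)\chi_k(\bar{x}y)$.) Your route buys brevity at the cost of two inputs the paper avoids and which you should state explicitly: the Peter--Weyl identification of $\mathscr{H}_k$ with the coefficient space of $\pi_k$ (justified by the eigenvalue computation together with the dimension count $\dim\mathscr{H}_k=(k+1)^2$), and, in the final convolution step, the fact that $SU(2)$ characters are real-valued, since the projection formula properly convolves against $\overline{\chi_k}$. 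Two places where your write-up is actually tighter than the paper's: observing that $T$ commutes with $\Delta_x$ and $\Delta_y$ and is hence block diagonal lets you dispense with the paper's preliminary bootstrapping argument (which deduces $\ker T$ and $\im T$ from self-adjointness under the provisional assumption that the image contains the diagonal subspace), and your two-sided bound on the ratio $(1+2k(k+2))/(k+1)^2$ makes the mapping property $T\in\mathscr{B}(H^s,H^{s+1})$ and the closedness of the range explicit, points the paper leaves implicit in the multiplier form.
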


Assume for the moment that the image of $T\in \mathscr{B}(H^{s-1},H^s)$ contains the space $B_s=\widehat{\bigoplus}_k \mathscr{H}_k\otimes \mathscr{H}_k$.  By Lemma \ref{kernel}, the kernel of $T$ in $H^s$ contains the subspace $A_s=\widehat{\bigoplus}_{k\not=\ell} \mathscr{H}_k\otimes \mathscr{H}_\ell$.  To prove the opposite inclusion, note that the spaces $A_s$ and $B_s$ are orthogonal complements.  So any $x\in\ker T$ has the form $x=a\oplus b$ for $b\in B_s$.  By assumption, $b=Tc$ for some $c\in H^{s-1}$.  Then applying $T$ gives $0=Tx=T^2c$, and this implies that $c=0$ because $T$ is self-adjoint.  Therefore $x\in A_s$, and so $\ker T=A_s$, as claimed.  Effectively the same argument shows that, under the same assumption, the image of $T$ on $H^{s-1}$ must then be {\em equal} to $B_s$.  It is therefore enough to show that the image of $T$ in $H^s$ contains the subspace $B_s$.

The proof of this fact along with last assertion of the theorem employs the zonal spherical harmonics $Z^{(k)}_x(y)$ on $S^3$, defined for $f\in C^\infty(S^3)$ by
$$\int_{S^3} Z_x^{(k)}(y) f(y)\,dy = (E_kf)(y)$$
where $E_k$ is the orthogonal projection onto $\mathscr{H}_k$.  They satisfy the following properties

\begin{lemma}[\cite{SteinWeiss}]\label{Zonals}\mbox{}
\begin{enumerate}[(a)]
\item For each fixed $x$, $y\mapsto Z_x^{(k)}(y)$ is in $\mathscr{H}_k$.
\item For each rotation $\rho\in SO(4)$, $Z_{\rho x}^{(k)}(\rho y) = Z_{x}^{(k)}(y)$.
\item Conversely, any function satisfying these two properties is a constant multiple of $Z^{(k)}_x(y)$.
\item The following integral identities hold:
$$(k+1)^2 = \int_{S^3} |Z^{(k)}_y(x)|^2\,dx = \int_{S^3} Z^{(k)}_x(x)\,dx = Z^{(k)}_e(e).$$
\end{enumerate}
\end{lemma}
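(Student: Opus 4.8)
The plan is to realize $Z_x^{(k)}$ as the reproducing kernel of the finite-dimensional space $\mathscr{H}_k\subset L^2(S^3)$ and to read the four assertions off from that description. Fix an orthonormal basis $\{Y_{k,1},\dots,Y_{k,d_k}\}$ of $\mathscr{H}_k$, where $d_k=\dim\mathscr{H}_k$, and set $Z_x^{(k)}(y)=\sum_{m=1}^{d_k} Y_{k,m}(x)\overline{Y_{k,m}(y)}$. A one-line check against the defining relation $\int_{S^3}Z_x^{(k)}(y)f(y)\,dy=(E_kf)(x)$ confirms that this is indeed the kernel, and a standard argument shows the sum is independent of the chosen basis. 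Part (a) is then immediate: for fixed $x$ the function $y\mapsto Z_x^{(k)}(y)$ is a finite linear combination of the $\overline{Y_{k,m}}$, which lie in $\mathscr{H}_k$ because the space of harmonic polynomials is closed under complex conjugation.

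For (b) I would use that $SO(4)$ acts unitarily on $L^2(S^3)$ and preserves $\mathscr{H}_k$ (rotations commute with $\Delta_{\mathbb{R}^4}$ and preserve homogeneity). Hence $\{Y_{k,m}\circ\rho\}$ is again an orthonormal basis of $\mathscr{H}_k$, and basis-independence of the kernel gives $\sum_m Y_{k,m}(\rho x)\overline{Y_{k,m}(\rho y)}=\sum_m Y_{k,m}(x)\overline{Y_{k,m}(y)}$, that is, $Z_{\rho x}^{(k)}(\rho y)=Z_x^{(k)}(y)$. For the converse (c), suppose $W_x(y)$ satisfies (a) and (b). Evaluating at a base point $x=e$, property (a) puts $W_e$ in $\mathscr{H}_k$, while (b) forces $W_e$ to be invariant under the stabilizer $SO(3)=\mathrm{Stab}(e)\subset SO(4)$. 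The crux of the whole lemma is the fact that the subspace of $SO(3)$-invariants in $\mathscr{H}_k$ is one-dimensional; granting this, $W_e=c\,Z_e^{(k)}$ for a scalar $c$, and transporting by an arbitrary $\rho$ with $\rho e=x$ (applying (b) to both $W$ and $Z$) yields $W_x=c\,Z_x^{(k)}$ everywhere. I expect this one-dimensionality — equivalently, that $(SO(4),SO(3))$ is a Gelfand pair — to be the main obstacle; the cleanest route is to observe that an $SO(3)$-invariant element of $\mathscr{H}_k$ is a harmonic polynomial on $\mathbb{R}\times\mathbb{R}^3$ depending only on the first coordinate and on $|x'|$, which homogeneity and harmonicity pin down up to scale.

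Finally, for the identities in (d) I would compute directly from the kernel formula. Orthonormality of the $Y_{k,m}$ gives $\int_{S^3}|Z_y^{(k)}(x)|^2\,dx=\sum_m|Y_{k,m}(y)|^2=Z_y^{(k)}(y)$, and part (b) together with transitivity of $SO(4)$ makes $Z_y^{(k)}(y)$ independent of $y$, hence equal to $Z_e^{(k)}(e)$. Integrating $Z_x^{(k)}(x)=\sum_m|Y_{k,m}(x)|^2$ over $S^3$ against the invariant probability measure likewise yields $\sum_m\|Y_{k,m}\|_2^2=d_k$, which simultaneously equals $Z_e^{(k)}(e)$. It remains to evaluate $d_k=\dim\mathscr{H}_k$, and here Proposition (a) does the work: the decomposition $\mathscr{P}_k=\mathscr{H}_k\oplus|x|^2\mathscr{P}_{k-2}$ gives $\dim\mathscr{H}_k=\dim\mathscr{P}_k-\dim\mathscr{P}_{k-2}=\binom{k+3}{3}-\binom{k+1}{3}=(k+1)^2$, completing the chain of equalities.
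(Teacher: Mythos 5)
Your proof is correct and is essentially the standard treatment: the paper gives no proof of this lemma, citing Stein--Weiss, and your reproducing-kernel formula $Z_x^{(k)}(y)=\sum_m Y_{k,m}(x)\overline{Y_{k,m}(y)}$, the characterization of $Z^{(k)}$ via one-dimensionality of the $SO(3)$-invariants in $\mathscr{H}_k$, and the dimension count $\dim\mathscr{H}_k=\binom{k+3}{3}-\binom{k+1}{3}=(k+1)^2$ are exactly the argument in that reference. One small point in your favor: your version of the defining relation reads $(E_kf)(x)$, correcting the paper's typo $(E_kf)(y)$, and your use of the invariant \emph{probability} measure is the normalization needed for the chain of equalities in (d).
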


The zonal harmonics enter the proof by integrating $Tf$, for $f\in \mathscr{H}_k\otimes\mathscr{H}_k$ against the product of zonal harmonics $Z^{(k)}_{x'}(x)Z^{(k)}_{y'}(y)$.  Fubini's theorem, followed by an obvious change of variables gives
\begin{align*}
\iint_{S^3\times S^3}Z^{(k)}_{x'}(x)Z^{(k)}_{y'}(y)Tf(x,y)\,dxdy&= \int_{S^3} \iint_{S^3\times S^3} Z^{(k)}_{x'}(x)Z^{(k)}_{y'}(y)f(xg,gy)\,dxdydg\\
&= \int_{S^3} \iint_{S^3\times S^3} Z^{(k)}_{x'}(xg^{-1})Z^{(k)}_{y'}(g^{-1}y)f(x,y)\,dxdydg\\
&=\iint_{S^3\times S^3}  f(x,y)\,dxdy\int_{S^3} Z^{(k)}_{x'g}(x)Z^{(k)}_{gy'}(y)dg \\
&\qquad\text{by invariance of $Z^{(k)}$}\\
&=\iint_{S^3\times S^3}  f(x,y)\,dxdy\int_{S^3} Z^{(k)}_{g}(\bar{x}'x)Z^{(k)}_{gy'}(gy)dg\\
&=\iint_{S^3\times S^3}  f(x,y)\,dxdy\int_{S^3} Z^{(k)}_{\bar{x}'xy'}(y)dg
\end{align*}
by the reproducing property of $Z^{(k)}$.  The bar denotes quaternionic conjugation, which is the same as inversion in the group $S^3$.  Thus for $f\in \mathscr{H}_k\otimes\mathscr{H}_k$, $Tf$ is obtained by integrating against the kernel
$$Z^{(k)}_{\bar{x}'xy'}(y) = Z^{(k)}_{\bar{x}'x}(y\bar{y}') = Z^{(k)}_{xy'}(x'y).$$

To establish the theorem, it is sufficient to show that $T^2|_{\mathscr{H}_k\otimes\mathscr{H}_k} = \lambda_k^2\operatorname{Id}_{\mathscr{H}_k\otimes\mathscr{H}_k}$.  Applying $T$ twice is the same as integrating against the kernel in the variables $(x'',y'')$ defined by
$$K(x'',y''; x,y) = \iint_{S^3\times S^3} Z^{(k)}_{x'y''}(x''y') Z^{(k)}_{xy'}(x'y)\,dx'dy'.$$
To show that $T^2|_{\mathscr{H}_k\otimes\mathscr{H}_k} = \lambda_k^2 \operatorname{Id}$, it is sufficient to prove that $K(x'',y'';x,y) = \lambda_k^2 Z_x(x'')Z_y(y'')$.  By the characterization in Lemma \ref{Zonals}(c) it is then enough to show that for all $g\in S^3$,
\begin{align*}
K(x'',y''; x,y) &= K(x''g,y'';xg,y) =  K(gx'',y'';gx,y)\\
&= K(x'',y''g;x,yg) = K(x'',gy'';g,gy),
\end{align*}
since every special orthogonal transformation has the form $x\mapsto gxh$ on $S^3$ for some $g,h$.  Clearly, by symmetry in the variables, it is enough to show the first two identities.  By invariance of the Haar measure,
\begin{align*}
 K(x''g,y'';xg,y) &=  \iint_{S^3\times S^3} Z^{(k)}_{x'y''}(x''gy') Z^{(k)}_{xgy'}(x'y)\,dx'dy'\\
&= \iint_{S^3\times S^3} Z^{(k)}_{x'y''}(x''\tilde{y}') Z^{(k)}_{x\tilde{y}'}(x'y)\,dx'd\tilde{y}' = K(x'',y''; x,y)
\end{align*}
and,
\begin{align*}
 K(gx'',y'';gx,y) &=  \iint_{S^3\times S^3} Z^{(k)}_{x'y''}(gx''y') Z^{(k)}_{gxy'}(x'y)\,dx'dy'\\
&=  \iint_{S^3\times S^3} Z^{(k)}_{g^{-1}x'y''}(x''y') Z^{(k)}_{xy'}(g^{-1}x'y)\,dx'dy'\\
&= \iint_{S^3\times S^3} Z^{(k)}_{\tilde{x}'y''}(x''y') Z^{(k)}_{xy'}(\tilde{x}'y)\,d\tilde{x}'dy' = K(x'',y''; x,y).
\end{align*}

So on $\mathscr{H}_k\otimes\mathscr{H}_k$, $T^2=\lambda_k^2\operatorname{Id}$.

It remains to calculate $\lambda_k$.  For this, Lemma \ref{Zonals}(d) gives
\begin{align*}
(k+1)^4\lambda_k^2&=\iint_{S^3\times S^3} K(x,y;x,y)\,dxdy \\
&= \iiiint_{(S^3)^4}Z^{(k)}_{x'y}(xy') Z^{(k)}_{xy'}(x'y)\,dx'dy'dxdy \\
&= \iiiint_{(S^3)^4}Z^{(k)}_{x'}(xy'\bar{y}) Z^{(k)}_{xy'}(x'y)\,dx'dy'dxdy \\
&= \iiint_{(S^3)^3}Z^{(k)}_{xy'}(xy')\,dy'dxdy \\
&= (k+1)^2
\end{align*}
so that $\lambda_k=(k+1)^{-1}$, as required.

\bibliography{xi_transform}{}
\bibliographystyle{kp}

\end{document}